\newtheorem{conject}[theorem]{Conjecture}
\begin{document}
\mainmatter              
\title{On Deeply Critical Oriented Cliques}
\titlerunning{On Deeply Critical Oriented Cliques}  
%
\author{Christopher Duffy\inst{1} \and Pavan P D\inst{2} \and
R. B. Sandeep\inst{3} \and Sagnik Sen\inst{2}}

\institute{University of Saskatchewan, Department of Mathematics and Statistics, Saskatoon, CANADA \and Indian Institute of Technology Dharwad, Department of Mathematics, Karnataka, INDIA \and Indian Institute of Technology Dharwad, Department of Computer Science and Engineering, Karnataka, INDIA
}
\maketitle              
\begin{abstract}
In this work we consider arc criticality in colourings of oriented graphs.
We study deeply critical oriented graphs, those graphs for which the removal of any arc results in a decrease of the oriented chromatic number by $2$.
We prove the existence of deeply critical oriented cliques of every odd order $n\geq 9$, closing an open question posed by Borodin et al. [\emph{Journal of Combinatorial Theory, Series B, 81(1):150–155, 2001}].
Additionally, we prove the non-existence of deeply critical oriented cliques among the family of circulant oriented cliques of even order.
\keywords{oriented graph, oriented chromatic number, critical graphs, deeply critical graphs}
\end{abstract}
In 1994, Courcelle~\cite{Co94} defined oriented colouring as part of his seminal work on the monadic second order logic of graphs in which he established the illustrious Courcelle's Theorem~\cite{courcell1995monadic}. 
In the years following, oriented colouring and the oriented chromatic number gained popularity and developed into an independent field of research. 
We refer the reader to Sopena's updated survey~\cite{S16} for a broad overview of the state of the art.

An \emph{oriented graph} $\overrightarrow{G}$ is a directed graph without any directed cycle of length one or two.
That is, it is a directed graph that is irreflexive and anti-symmetric.
We denote the set of vertices and arcs of an oriented graph $\overrightarrow{G}$ by $V(\overrightarrow{G})$ and $A(\overrightarrow{G})$, respectively.
Its underlying simple graph is denoted by $G$.

By generalizing to oriented graphs the interpretation of graph colouring as homomorphism to a complete graph, one arrives at the following definition of oriented graph colouring.

An \emph{oriented $k$-colouring} of an oriented graph $\overrightarrow{G}$ is a function $\phi: V(\overrightarrow{G}) \to \{1,2,\dots k\}$
so that 
\begin{enumerate}[(i)]
	\item $\phi(x) \neq \phi(y)$ for all $xy \in A(\overrightarrow{G})$; and
	\item for all $xy, uv \in A(\overrightarrow{G})$, if $\phi(x) = \phi(v)$, then $\phi(y) \neq \phi(u)$.
\end{enumerate}
The vertices of the target of homomorphism correspond to the colours.
Condition (i) ensures that the target of the homomorphism is irrflexive.
Condition (ii) ensures that the target of the homomorphism is anti-symmetric.

When $y=u$, condition (ii) implies vertices connected by a directed path of length $2$ (i.e., a \emph{$2$-dipath}) must receive different colours.

The \emph{oriented chromatic number} $\chi_o(\overrightarrow{G})$ of an oriented graph $\overrightarrow{G}$ is the minimum $k$ for which $\overrightarrow{G}$ admits an oriented $k$-colouring. 

A major theme in oriented colourings research is the study of analogous versions of graph colouring concepts for oriented graphs. 
In 2004 Klostermeyer and MacGillivray~\cite{KMG04a} generalized the notion of clique to oriented colouring, studying those oriented graphs for which $\chi_o(\overrightarrow{G})= |V(\overrightarrow{G})|$.
This work was continued by Nandi, Sen, and Sopena~\cite{nss16}.
Such oriented graphs are called \emph{absolute oriented cliques} and admit the following classification.
\begin{theorem}[Klostermeyer and MacGillivray]\label{thm:2dipathClassify}
	An oriented graph is an absolute oriented clique if and only if every pair of non-adjacent vertices are connected by a $2$-dipath.
\end{theorem}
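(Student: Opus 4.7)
The plan is to observe that conditions~(i) and~(ii) together force distinct colours on exactly two kinds of vertex pairs, and then to construct or rule out small colourings accordingly. The excerpt already notes that condition~(ii) with $y = u$ forces $\phi(x) \neq \phi(v)$ whenever $xy$ and $yv$ are both arcs, i.e., whenever $x$ and $v$ are the endpoints of a $2$-dipath. Combined with condition~(i), any two vertices that are either adjacent or joined by a $2$-dipath must receive different colours in any oriented colouring.

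For the \emph{if} direction, under the hypothesis that every pair of non-adjacent vertices sits on a common $2$-dipath, the observation above shows that any oriented colouring assigns a distinct colour to every vertex, so $\chi_o(\overrightarrow{G}) \geq |V(\overrightarrow{G})|$. The matching upper bound is trivial: the colouring $\phi(x) = x$ satisfies~(i) (no loops) and~(ii) (no digons), since a violation of~(ii) under an injective $\phi$ would require $x = v$ and $y = u$, yielding a $2$-cycle $xy, yx$.

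For the \emph{only if} direction, I would argue by contrapositive. Given non-adjacent vertices $a, b$ not joined by any $2$-dipath, define $\phi$ by $\phi(a) = \phi(b)$ and assign every other vertex its own distinct colour, using $|V(\overrightarrow{G})|-1$ colours in total. Condition~(i) is immediate since $a, b$ are non-adjacent. For condition~(ii), any failure requires two colour coincidences $\phi(x) = \phi(v)$ and $\phi(y) = \phi(u)$, and each such coincidence must be either a trivial equality of vertices or the pair $\{a, b\}$. A short case split on the four resulting configurations forces either a loop, a $2$-cycle, an arc between $a$ and $b$, or a $2$-dipath from $a$ to $b$ or from $b$ to $a$ through some intermediate vertex, each of which contradicts either the definition of an oriented graph or the choice of the pair $a, b$.

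The main obstacle is simply being careful with the four-vertex case analysis in the contrapositive; the conceptual content is already captured by the remark following condition~(ii), namely that arcs $xy, uv$ with $\phi(x) = \phi(v)$ and $\phi(y) = \phi(u)$ and $y \neq u$ encode precisely a $2$-dipath between the two vertices sharing a colour.
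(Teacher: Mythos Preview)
The paper does not supply its own proof of this theorem; it is quoted as a result of Klostermeyer and MacGillivray and used as a black box. Your argument is correct and complete: the \emph{if} direction is exactly the observation the paper makes after condition~(ii), and your contrapositive for the \emph{only if} direction---colouring $a$ and $b$ the same and everything else distinctly---is the standard elementary proof. The four-case split you outline (both coincidences trivial, exactly one trivial, neither trivial) does indeed reduce every potential violation of~(ii) to a loop, a digon, an $ab$-arc, or a $2$-dipath between $a$ and $b$, so nothing is missing there.

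One small wording issue: your final sentence says that arcs $xy,uv$ with $\phi(x)=\phi(v)$, $\phi(y)=\phi(u)$ and $y\neq u$ ``encode precisely a $2$-dipath between the two vertices sharing a colour.'' That is only true in the special setting of your almost-injective $\phi$; in general such a configuration need not be a $2$-dipath at all. It would be clearer to say that, \emph{under the colouring you constructed}, any violation of~(ii) forces one of the forbidden local configurations.
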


In 2001 Borodin, et al.~\cite{borodin2001deeply}, extended the notion of arc criticality for graph colouring to oriented colourings.
Notably they gave examples of oriented graphs for which the removal of any arc decreases the oriented chromatic number by $2$, the maximum possible.
Formally, a \emph{deeply critical oriented graph} is an oriented graph $\overrightarrow{G}$
for which
\[\chi_o(\overrightarrow{G} - xy) = \chi_o(\overrightarrow{G}) - 2\]
for each arc $xy \in A(\overrightarrow{G})$. 

Borodin, et al.~\cite{borodin2001deeply} gave an infinite family of deeply critical oriented graphs that were also absolute oriented cliques.
For convenience, we refer to such an oriented graph as a \emph{deeply critical oriented clique}. 
By way of example, we invite the reader to verify that the directed cycle on $5$ vertices is a deeply critical oriented clique.
\begin{theorem}[Borodin et al.~\cite{borodin2001deeply}]
	\label{th borodin et al}
	There exists a deeply critical oriented clique of order $n$  for every $n=2 \cdot 3^m - 1$, where $m \geq 1$.  
\end{theorem}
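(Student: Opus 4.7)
The plan is to prove the statement by induction on $m$. For the base case $m = 1$ we have $n_1 = 5$, witnessed by the directed $5$-cycle $\overrightarrow{C}_5$, already noted in the excerpt to be a deeply critical oriented clique.

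For the inductive step I would exploit the identity $n_{m+1} = 3 n_m + 2$ and build $\overrightarrow{G}_{m+1}$ from three vertex-disjoint copies $\overrightarrow{G}^{(0)}, \overrightarrow{G}^{(1)}, \overrightarrow{G}^{(2)}$ of a deeply critical oriented clique $\overrightarrow{G}_m$ on $n_m$ vertices, together with two additional vertices $u$ and $v$. The arcs between copies would be cyclic: every vertex of $\overrightarrow{G}^{(i)}$ sends an arc to every vertex of $\overrightarrow{G}^{(i+1)}$, with indices taken modulo $3$. The arcs incident to $u$ and $v$ must be chosen carefully so that the resulting oriented graph still satisfies the $2$-dipath criterion of Theorem~\ref{thm:2dipathClassify}, which then forces $\chi_o(\overrightarrow{G}_{m+1}) = n_{m+1}$.

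To verify deep criticality, I would partition the arcs of $\overrightarrow{G}_{m+1}$ into three types: (a) arcs internal to a copy $\overrightarrow{G}^{(i)}$, (b) arcs between two different copies, and (c) arcs incident to $u$ or $v$. For type (a), the inductive hypothesis supplies an $(n_m - 2)$-colouring of the punctured copy, which combines with disjoint palettes on the remaining two copies and on $\{u,v\}$ to give an $(n_{m+1} - 2)$-colouring of $\overrightarrow{G}_{m+1} - xy$. For types (b) and (c), I would show that deletion of the arc destroys enough $2$-dipaths that two specific pairs of vertices become mergeable under a single colour class, yielding the same bound. The matching lower bound $\chi_o(\overrightarrow{G}_{m+1} - xy) \geq n_{m+1} - 2$ follows because the punctured graph still contains large absolute-oriented-clique substructures, so $\chi_o$ cannot drop below $n_{m+1} - 2$.

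The main obstacle, in my view, is designing the arcs incident to $u$ and $v$: these must be symmetric enough that the $2$-dipath condition of Theorem~\ref{thm:2dipathClassify} holds globally, yet flexible enough that each possible single-arc deletion frees up exactly two mergeable vertex pairs (and no fewer). This gadget-design step is the delicate part; once such a $(u, v)$-attachment is fixed, the remainder reduces to structured case analysis supported by the inductive hypothesis and the $3$-fold cyclic symmetry between the copies.
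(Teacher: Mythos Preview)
The paper does not supply its own proof of this statement: Theorem~\ref{th borodin et al} is quoted from Borodin et al.\ as background, and the paper's own work begins with Theorem~\ref{thm:oddDCOC}. So there is nothing in the present paper to compare your proposal against.

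That said, your construction has a genuine defect that is independent of the unresolved $(u,v)$-gadget. If every vertex of $\overrightarrow{G}^{(i)}$ sends an arc to every vertex of $\overrightarrow{G}^{(i+1)}$, then deleting a single inter-copy arc $xy$ with $x\in V(\overrightarrow{G}^{(0)})$ and $y\in V(\overrightarrow{G}^{(1)})$ does not disconnect \emph{any} pair among the $3n_m$ copy-vertices. Indeed $x$ and $y$ themselves remain joined by the $2$-dipath $y\to z\to x$ for every $z\in V(\overrightarrow{G}^{(2)})$; and any $2$-dipath $w\to x\to y$ or $x\to y\to w$ that used the deleted arc is redundant, because such a $w$ lies in $V(\overrightarrow{G}^{(2)})\cup V(\overrightarrow{G}^{(0)})$ (respectively $V(\overrightarrow{G}^{(1)})\cup V(\overrightarrow{G}^{(2)})$) and is therefore already adjacent to $y$ (respectively $x$) via the complete bipartite blocks. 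Thus after the deletion the $3n_m$ copy-vertices still induce an absolute oriented clique, so any $(n_{m+1}-2)$-colouring would have to merge two pairs each involving $u$ or $v$; but there are on the order of $n_m^2$ type-(b) arcs, and only two auxiliary vertices cannot supply that many distinct mergeable pairs. Deep criticality fails for the type-(b) arcs, and the inductive step collapses regardless of how $u$ and $v$ are attached. A tripling construction that actually works must be far sparser between the copies than complete bipartite.
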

In their work Borodin et al.~\cite{borodin2001deeply} speculated the existence of deeply critical oriented cliques of odd order $n$ for all $n \geq 33$ and left it open.
We close this long-standing open problem by proving the following result. 
\begin{theorem}\label{thm:oddDCOC}
Let $n\geq 1$ be an odd integer.
There exists a deeply critical oriented clique of order $n$, if and only if $n \geq 5$, and $n \neq 7$. 
\end{theorem}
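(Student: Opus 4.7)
The plan is to handle both directions of Theorem \ref{thm:oddDCOC} separately.

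For the \emph{only if} direction I must rule out $n \in \{1,3,7\}$. The cases $n=1$ and $n=3$ are immediate from the definition: for $n=1$ there is no arc available to witness a decrease of $2$, while for $n=3$ the only oriented cliques are the two tournaments on three vertices, and a direct check shows that removing any arc drops $\chi_o$ by at most $1$. The case $n=7$ demands more work. By Theorem \ref{thm:2dipathClassify}, an order-$7$ oriented clique is a tournament on seven vertices in which any non-adjacent pair (vacuously, here) is joined by a $2$-dipath. I would perform a structural case analysis of such tournaments, using the strong constraint $\chi_o=7$ together with automorphism symmetries to control the number of cases, and in each one exhibit a $6$-colouring of a suitably chosen arc-deleted subtournament.

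For the \emph{if} direction, the case $n=5$ is covered by Theorem \ref{th borodin et al} via the directed $5$-cycle. For odd $n \geq 9$, I would propose an explicit family of \emph{circulant} oriented graphs $\vec{G}_n = C_n(S_n)$ on $\mathbb{Z}_n$ with carefully chosen connection sets $S_n$. The verification splits into two tasks. First, show that $\chi_o(\vec{G}_n)=n$; by Theorem \ref{thm:2dipathClassify} this reduces to checking that every pair of non-adjacent vertices is joined by a $2$-dipath. Second, for one representative arc $xy$ from each orbit of arcs under the cyclic action, exhibit an explicit oriented $(n-2)$-colouring of $\vec{G}_n - xy$. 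The matching lower bound $\chi_o(\vec{G}_n - xy) \geq n-2$ is automatic since deleting a single arc can decrease the oriented chromatic number by at most $2$.

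The main obstacle I expect is the uniform design of the construction across all odd $n \geq 9$. Since the family $\{2\cdot 3^m-1 : m\geq 1\}$ of Theorem \ref{th borodin et al} is sparse among the odd integers, the new construction must cover a dense set of orders, and it is unlikely that a single uniform recipe of $S_n$ with one colouring pattern will suffice. I anticipate partitioning the argument into residue classes of $n$ modulo a small integer, giving a separate connection set (and separate witnessing colourings) in each class, together with a handful of small sporadic orders such as $n=9,11,13,15$ treated individually before the pattern stabilises. Handling the non-existence for $n=7$ may be similarly delicate, as it requires eliminating \emph{every} potential deeply critical configuration on the tournaments of order seven.
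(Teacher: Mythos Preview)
Your proposal has two genuine gaps.

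For $n=7$ you write that an order-$7$ oriented clique ``is a tournament on seven vertices in which any non-adjacent pair (vacuously, here) is joined by a $2$-dipath.'' This misreads Theorem~\ref{thm:2dipathClassify}: an absolute oriented clique need not be a tournament at all---the directed $5$-cycle is already a non-tournament example. The condition is only that non-adjacent pairs be joined by a $2$-dipath, and such pairs may well exist. Your case analysis would therefore have to run over all oriented cliques on seven vertices, not just tournaments, and there is no structural shortcut on offer. The paper handles $n=7$ by computer search.

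For odd $n\geq 9$ your circulant route is a genuine departure from the paper, but there is real doubt it can be carried out: the paper's concluding section explicitly leaves open the classification of odd orders admitting a deeply critical \emph{circulant} oriented clique, so your plan presumes a stronger result than the authors were able to establish. The paper instead argues recursively. It exhibits a specific deeply critical oriented clique on $9$ vertices equipped with an \emph{extending partition} $V=X_1\sqcup X_2\sqcup X_3$, and proves (Lemmas~\ref{lem:ExtensionClique} and~\ref{lem:6extendable}) that from any extendable deeply critical oriented clique on $n$ vertices one obtains deeply critical oriented cliques on $n+2$, $n+4$, and $n+6$ vertices by adjoining two, four, or six new vertices attached to the parts $X_i$ in a prescribed pattern, with the $6$-extension again extendable. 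Induction from $n=9$ then covers every odd $n\geq 9$. This avoids entirely the problem of designing connection sets $S_n$ uniformly in $n$, which you correctly anticipated as the main obstacle to your approach.
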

We prove this theorem in Section~\ref{sec:thmOddProof}.

A curious aspect of the study of deeply critical oriented cliques is a lack of examples of such oriented graphs of even order, despite intensive computer search.
We conjecture such deeply critical oriented cliques not to exist.
\begin{conject}\label{conj:onlyOdd}
	There exists a deeply critical oriented clique of order $n$, if and only if $n$ is odd, $n \geq 5$, and $n \neq 7$. 
\end{conject}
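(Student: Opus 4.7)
The forward direction of the biconditional is already delivered by Theorem~\ref{thm:oddDCOC}, which constructs deeply critical oriented cliques of every admissible odd order. All work lies in the converse: showing that no deeply critical oriented clique of even order exists. My plan is to translate the definition into rigid structural constraints on each arc, then derive a parity obstruction that rules out even $n$.

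The plan is to start with an arbitrary arc $xy$ of a deeply critical oriented clique $\overrightarrow{G}$ on $n$ vertices and analyze what an optimal $(n-2)$-colouring of $\overrightarrow{G} - xy$ must look like. Since $\overrightarrow{G}$ is an absolute oriented clique, Theorem~\ref{thm:2dipathClassify} says every non-adjacent pair is joined by a 2-dipath, so the only pairs that can be merged after deletion are pairs whose 2-dipath connections pass through the arc $xy$. A counting of merged vertices shows the colouring either merges two disjoint pairs or one triple; in the two-pair case one gets a distinguished ordered pair $(a,b)$ such that $ya \in A(\overrightarrow{G})$, $xa \notin A(\overrightarrow{G})$, $bx \in A(\overrightarrow{G})$, $by \notin A(\overrightarrow{G})$, and every 2-dipath between the merged endpoints goes through $xy$. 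I would collect these \emph{witnesses} into a function $\Phi \colon A(\overrightarrow{G}) \to V(\overrightarrow{G})^2$ and prove several rigidity lemmas: witnesses of an arc and its ``reverse role'' across the graph are linked; the witness map has controlled fibers; and deleting distinct arcs cannot share all witnesses.

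From these rigidity lemmas the strategy is to count incidences of witness pairs in two different ways. In an absolute oriented clique, the total number of arcs and the number of non-adjacent (2-dipath) pairs are both controlled by $n$; feeding the witness function through a double count of triples $(xy,a,b)$ should yield a linear relation modulo $2$ whose right-hand side depends on the parity of $n$. I expect this to reproduce and extend the even-order circulant non-existence result mentioned in the abstract by leveraging the vertex-transitive symmetry in that special case as a sanity check, and then to push through in full generality by replacing transitivity with the rigidity lemmas.

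The main obstacle, and the reason this remains a conjecture, is the triple-merge case together with the possibility that an arc admits several incompatible witness pairs. A clean parity argument requires either showing that triple merges are structurally impossible in an absolute oriented clique on even many vertices (likely via a short argument using the 2-dipath condition on the three merged vertices), or absorbing them into the same counting framework with a uniform weight. Canonicalizing the witness map in the presence of multiple choices will almost certainly require further structural input: for instance, uniqueness of the 2-dipath between non-adjacent pairs, or bounds on in- and out-degrees forced by deep criticality. Establishing such uniqueness without assuming additional symmetry (as circulants provide for free) is where I expect the technical heavy lifting to be concentrated.
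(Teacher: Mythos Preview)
The statement you are addressing is presented in the paper as a \emph{conjecture}, not a theorem: the paper does not offer a proof of it. The forward direction is indeed covered by Theorem~\ref{thm:oddDCOC}, as you say. For the converse the paper proves nothing in general; it only gives partial evidence by ruling out even orders within the class of circulant oriented cliques (Theorem~\ref{thm:noEvenCirc}), via the explicit arithmetic characterization of deeply critical circulant cliques in Lemma~\ref{lem:char}. So there is no ``paper's own proof'' to compare your attempt against.

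Your proposal is candidly a research outline rather than a proof, and you already flag the genuine gaps yourself: the triple-merge case, non-uniqueness of witness pairs, and the absence of any established uniqueness of $2$-dipaths between non-adjacent vertices in a general deeply critical oriented clique. These are real obstructions, and nothing in the paper supplies the missing structural input. In particular, the paper's even-order argument for circulants relies entirely on vertex-transitivity and the arithmetic of $\mathbb{Z}_n$ (specifically that $\frac{n}{2}\equiv -\frac{n}{2}$ forces a violation of condition~(b) in Lemma~\ref{lem:char}); it does not use or suggest a witness-counting or global parity argument of the kind you sketch. Your approach is therefore genuinely different in spirit from anything the paper does, but as it stands it is a programme whose hardest steps --- canonical witness selection and handling of triple merges --- remain open, exactly matching the status the paper assigns to the conjecture.
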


Let $n$ be an integer and let $S\subseteq \mathbb{Z}_n$ so that for all $k \in \mathbb{Z}_n$ if $k \in S$, 
then $-k \not\in S$. 
Recall that the \emph{oriented circulant graph} $\overrightarrow{C}(n,S)$ is the oriented graph with vertex set $\mathbb{Z}_n$ so that \(ij \in A (\overrightarrow{C}(n,S))\) when $j-i$ is congruent modulo $n$ to an element of $S$.

We provide further evidence towards Conjecture~\ref{conj:onlyOdd} by proving no deeply critical oriented clique appears among the family of oriented circulant graphs of even order.
\begin{theorem}\label{thm:noEvenCirc}
	There does not exist any deeply critical oriented circulant clique of even order. 
\end{theorem}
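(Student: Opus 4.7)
I would begin by assuming $\overrightarrow{G} = \overrightarrow{C}(n,S)$ is a deeply critical oriented circulant clique of even order $n$ and aiming for a contradiction. Since $n/2 = -n/2$ in $\mathbb{Z}_n$, the hypothesis $S \cap (-S) = \emptyset$ forces $n/2 \notin S$, so the antipodal pair $\{0, n/2\}$ is non-adjacent; Theorem~\ref{thm:2dipathClassify} then supplies $a, b \in S$ with $a + b \equiv n/2 \pmod n$. By the vertex-transitivity of circulants, it suffices to analyse the removal of an arc of the form $A = (0, s)$ with $s \in S$, and I write $N(d) = |\{a \in S : d - a \in S\}|$ for the number of $2$-dipaths from $0$ to $d$ in $\overrightarrow{G}$.

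The heart of the argument is a structural lemma: in any valid oriented $(n-2)$-colouring of $\overrightarrow{G} - A$, every pair of vertices sharing a colour lies in $\{\{0, s\}, \{0, 2s\}, \{-s, s\}\}$. To prove this, observe that such a pair must be non-adjacent (in $\overrightarrow{G}-A$) and have no $2$-dipath between them in either direction; since any two distinct $2$-dipaths between the same unordered pair of vertices use pairwise-disjoint arcs, removing a single arc destroys at most one $2$-dipath per pair. The oriented-clique condition gives $N(d) + N(-d) \geq 1$ for every non-adjacent difference $d$, so a non-adjacent merged pair must have exactly one $2$-dipath in $\overrightarrow{G}$, and that $2$-dipath must be killed by $A$. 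A case analysis on whether $A$ appears as the first or the second arc of this unique $2$-dipath, together with the observation that any representation $s + t = a + b$ in $S \times S$ with $s \neq t$ automatically comes with the swap $(t, s)$ and therefore cannot be unique, pins the non-adjacent candidates to $\{0, 2s\}$ and $\{-s, s\}$; the arc-endpoint pair $\{0, s\}$ is the only remaining candidate.

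Deep criticality requires either two disjoint merged pairs or a single merged triple inside this list. Every triple drawn from $\{0, s, 2s, -s\}$ contains either $\{s, 2s\}$ (joined by an arc of difference $s$) or $\{0, -s\}$ (joined by the arc $-s \to 0$, also of difference $s$), so no triple survives. Pairing $\{0, s\}$ with either of the other two candidates creates a common vertex ($0$ or $s$), so the only viable configuration is that $\{0, 2s\}$ and $\{-s, s\}$ are both merged and disjoint; this requires $3s \not\equiv 0 \pmod n$ (otherwise the two pairs collapse and $\{0, 2s\}$ coincides with the adjacent pair $\{0, -s\}$) and, crucially, $N(-2s) = 0$, i.e.\ $-2s \notin S + S$. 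Hence $N(-2s) = 0$ must hold for every $s \in S$.

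The contradiction is now immediate: from $a + b \equiv n/2 \pmod n$ we get $2a \equiv -2b \pmod n$, and since $2b = b + b$ lies in $S + S$, we conclude $-2a \in S + S$, contradicting $N(-2a) = 0$. The main obstacle in writing out the proof is the structural lemma, in particular the explicit enumeration of how the removed arc can sit inside a $2$-dipath of an unordered pair and the uniqueness-of-decomposition step that restricts the mergeable candidates to the listed three.
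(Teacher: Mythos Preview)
Your proof is correct and follows essentially the same line as the paper's: both derive, as a necessary condition for deep criticality, that $-2s \notin S+S$ for every $s\in S$, and then obtain the contradiction from $a+b\equiv n/2$ via $-2a\equiv 2b\in S+S$. The only difference is packaging---the paper first proves a full characterisation of deeply critical circulant cliques (Lemma~\ref{lem:char}) and then invokes its condition~(b), whereas you establish just the needed part of that condition directly through your structural lemma on $(n-2)$-colourings.
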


Our work proceeds as follows.
We prove Theorems~\ref{thm:oddDCOC} and~\ref{thm:noEvenCirc} in Sections~\ref{sec:thmOddProof} and~\ref{sec:noEvenCirc}, respectively.
In this former section we provide a method to construct a deeply critical oriented clique for any odd integer $n\geq 5$, exclusive of $n=7$.
In this latter section we give a full classification of deeply critical oriented circulant cliques.
We provide concluding remarks and suggestions for future work in Section~\ref{sec:Conclusions}.
We refer the reader to~\cite{bondy} for definitions of standard graph theoretic terminology and notation not defined herein.

\section{Proof of Theorem~\ref{thm:oddDCOC}}\label{sec:thmOddProof}
We begin by defining the following terms and notations. 
Let $\overrightarrow{G}$ be an oriented graph. 
An \emph{extending partition} of $\overrightarrow{G}$ is a partition of its set of vertices $V(\overrightarrow{G}) = X_1 \sqcup X_2 \sqcup X_3$ so that 
\begin{enumerate}[(i)]
	\item there is no arc from a vertex of $X_{i+1}$ to a vertex of $X_{i}$, 
	for all $i \in \{1,2,3\}$;
	\item for each $u \in X_i$, there exists a vertex $v \in X_{i+1}$ such that 
	$N^-(v) \cap X_i = \{u\}$, 
	for all $i \in \{1,2,3\}$; and
	\item for each $v \in X_{i+1}$, there exists a vertex $u \in X_{i}$ such that $N^+(u) \cap X_{i+1} = \{v\}$, 
	for all $i \in \{1,2,3\}$,
\end{enumerate}
where addition in indices is taken modulo $3$.

We say an oriented graph is \textit{extendable} when it admits an extending partition.
For such graphs we define the following supergraphs.
Let $\overrightarrow{G}$ be an oriented graph with  extending partition $V(\overrightarrow{G}) = X_1 \sqcup X_2 \sqcup X_3$. 
The \emph{$6$-extension} of  $\overrightarrow{G}$  is the graph $\overrightarrow{G}_6$ constructed from $\overrightarrow{G}$ as follows (see Figure~\ref{fig:6extend}): 

\begin{itemize}
	\item Include six new vertices $x_1^{-}$, $x_1^{+}$, $x_2^{-}$,
	$x_2^{+}$, $x_3^{-}$, $x_3^{+}$  to the graph $\overrightarrow{G}$, and add the arcs $x_1^{-}x_2^{+}$, $x_1^{+}x_2^{-}$, $x_2^{-}x_3^{+}$, $x_2^{+}x_3^{-}$, $x_3^{-}x_1^{+}$, and $x_3^{+}x_1^{-}$. 
	\item Add all the arcs of the form $x_1^{-}x$ and $xx_1^{+}$ for all $x \in X_1$. 
	\item Add all the arcs of the form $x_2^{-}x$ and $xx_2^{+}$ for all $x \in X_2$.
	\item Add all the arcs of the form $x_3^{-}x$ and $xx_3^{+}$ for all $x \in X_3$.
\end{itemize}

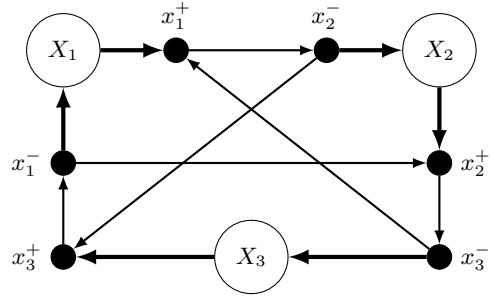
\begin{figure}[htbp]
	\centering
	\begin{tikzpicture}
		\tikzset{vertex/.style = {shape=circle,draw,minimum size=1em, fill=black}}
		\tikzset{edge/.style = {->,> = latex}}
		\node[vertex,label=left:$x_1^{-}$] (a1) at  (0.5,0) {};
		\node[vertex, label = above:$x_1^{+}$] (a2) at  (2,1.5) {};
		\node[vertex, label = above:$x_2^{-}$] (b1) at  (4,1.5) {};
		\node[vertex, label = right:$x_2^{+}$] (b2) at  (5.5,0) {};
		\node[vertex, label = right:$x_3^{-}$] (c1) at  (5.5,-1.25) {};
		\node[vertex, label = left:$x_3^{+}$] (c2) at  (0.5,-1.25) {};
		
		\tikzset{vertex/.style = {shape=circle,draw,minimum size=3em}}
		\node[vertex] (A) at  (0.5,1.5) {$X_1$};
		\node[vertex] (B) at  (5.5,1.5) {$X_2$};
		\node[vertex] (C) at  (3,-1.25) {$X_3$};
		\draw[edge, ultra thick] (a1) to (A);
		\draw[edge, ultra thick] (A) to (a2);
		\draw[edge, thick] (a2) to (b1);
		\draw[edge, ultra thick] (b1) to (B);
		\draw[edge, ultra thick] (B) to (b2);
		\draw[edge, thick] (b2) to (c1);
		\draw[edge, ultra thick] (c1) to (C);
		\draw[edge, ultra thick] (C) to (c2);
		\draw[edge, thick] (c2) to (a1);
		\draw[edge, thick] (c1) to (a2);
		\draw[edge, thick] (b1) to (c2);
		\draw[edge, thick] (a1) to (b2);
	\end{tikzpicture}
	\caption{Construction of $\protect{\overrightarrow{G}_6}$. Thickened arcs indicate the existence of all possible arcs between the vertex and the set $X_i$. Arcs between sets of vertices are not shown.}
	\label{fig:6extend}
\end{figure}

Following the definition $6$-extension, we define two further extensions of \(\overrightarrow{G}\), which arise as induced subgraphs of  $\overrightarrow{G}_6$.
The \emph{$4$-extension} of  
$\overrightarrow{G}$ is the graph obtained from  $\overrightarrow{G}_6$ by 
deleting the vertices $x_1^{+}$ and $x_2^{-}$ from $\overrightarrow{G}_6$. 
The \emph{$2$-extension} of  $\overrightarrow{G}$  is the graph obtained from  $\overrightarrow{G}_6$ deleting the vertices $x_1^{-}, x_2^{+}, x_3^{-}$ and $x_3^{+}$ from $\overrightarrow{G}_6$.

\begin{lemma}\label{lem:ExtensionClique}
	Let $\overrightarrow{G}$ be an extendable deeply critical oriented clique.
	The  $2$-extension, $4$-extension, and the $6$-extension of $\overrightarrow{G}$ are deeply critical oriented cliques. 
\end{lemma}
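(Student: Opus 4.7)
The plan is to prove the statement for the 6-extension $\overrightarrow{G}_6$; the 4- and 2-extensions are induced subgraphs with strictly fewer added vertices and yield to the same scheme. The argument has three ingredients: (A) $\overrightarrow{G}_6$ is an absolute oriented clique, (B) one reads off $\chi_o(\overrightarrow{G}_6)=\chi_o(\overrightarrow{G})+6$, and (C) removing any arc of $\overrightarrow{G}_6$ drops the oriented chromatic number by exactly $2$.

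For (A), I invoke Theorem~\ref{thm:2dipathClassify} and exhibit a 2-dipath between each non-adjacent pair. Pairs inside $V(\overrightarrow{G})$ are covered because $\overrightarrow{G}$ is already an oriented clique. A pair consisting of an added vertex $x_i^\epsilon$ and a vertex $v\in X_j$ with $j\ne i$ is handled by combining the dominating arcs of $x_i^\epsilon$ on $X_i$ with the cross-arcs $X_1\to X_2\to X_3\to X_1$ guaranteed by conditions (ii) and (iii) of the extending partition. A pair among the added vertices is handled by a small case analysis using the 6-cycle $x_1^-\,x_2^+\,x_3^-\,x_1^+\,x_2^-\,x_3^+$ formed by the six new arcs, occasionally routing through a vertex of some $X_i$ (which exists, since (ii) and (iii) force $|X_i|\ge 1$). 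Step (B) is then immediate, since every absolute oriented clique has oriented chromatic number equal to its order.

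For (C), fix an arc $e$ of $\overrightarrow{G}_6$. I construct an oriented colouring of $\overrightarrow{G}_6-e$ with $\chi_o(\overrightarrow{G})+4$ colours, according to the type of $e$. If $e\in A(\overrightarrow{G})$, I take a colouring of $\overrightarrow{G}-e$ with $\chi_o(\overrightarrow{G})-2$ colours granted by the deeply critical hypothesis on $\overrightarrow{G}$, and assign six fresh colours to the added vertices; condition (i) of oriented colouring is clear, and condition (ii) holds because every arc incident to an added vertex uses a colour appearing nowhere in $V(\overrightarrow{G})$. If $e$ is a dominating arc $x_i^-v$ or $v x_i^+$ with $v\in X_i$, I colour $\overrightarrow{G}$ with its $\chi_o(\overrightarrow{G})$ distinct colours, assign four fresh colours to four of the added vertices, and merge the remaining two added vertices into colour classes of suitably chosen original vertices supplied by the private in/out-neighbour conditions (ii)--(iii) of the extending partition. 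If $e$ is one of the six arcs among the added vertices, I instead perform two merges within the added vertices themselves, exploiting the 6-cycle now broken by the removal of $e$.

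The main obstacle is the second and third sub-cases of (C): each merge must avoid both a monochromatic arc and a forbidden 2-dipath colour pattern in condition (ii). Verifying that the ``private'' in- and out-neighbours delivered by conditions (ii)--(iii) of the extending partition suffice to rule out every such bad pattern is the technical heart of the argument, and the same verification must carry over (with two or four merges in place of six) when the construction is specialised to the 4- and 2-extensions.
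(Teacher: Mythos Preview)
Your proposal is correct and follows essentially the same route as the paper: verify the clique property via Theorem~\ref{thm:2dipathClassify}, then handle arc removal by exhibiting two colour merges, splitting into the cases $e\in A(\overrightarrow{G})$, $e$ a dominating arc, and $e$ an arc among the new vertices. Your treatment of the first case (use deep criticality of $\overrightarrow{G}$ and give the six new vertices fresh colours) is exactly what the paper's one-line appeal to ``no new 2-dipaths among old vertices'' amounts to, and for the other two cases the paper performs precisely the merges you describe: for $\overrightarrow{G}_6-x_1^{-}u$ it merges $\{x_1^{-},v\}$ and $\{x_3^{+},u\}$ with $v$ the private out-neighbour of $u$ in $X_2$, and for $\overrightarrow{G}_6-x_1^{-}x_2^{+}$ it merges $\{x_1^{-},x_3^{-}\}$ and $\{x_2^{+},x_3^{+}\}$.

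One small point to flag: your strategy ``perform two merges within the added vertices themselves'' for the third sub-case does not transfer verbatim to $\overrightarrow{G}_2$ (only two added vertices) or to every arc of $\overrightarrow{G}_4$ (e.g.\ removing $x_2^{+}x_3^{-}$ forces one of the merges to involve an original vertex of $X_3$). The paper hand-waves this with ``the same arguments are valid'', and you signal awareness of it in your closing paragraph, but when you write up the details be sure to adjust the merge pairs accordingly rather than relying on the $\overrightarrow{G}_6$ pattern.
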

\begin{proof}
Let $\overrightarrow{G}$ be a deeply critical oriented clique having a extending partition $V(\overrightarrow{G}) = X_1 \sqcup X_2 \sqcup X_3$. 
Let $\overrightarrow{G}_6$ be the 6-extension of $\overrightarrow{G}$. We first verify that $\overrightarrow{G}_6$ is an oriented clique.

Note that, due to symmetry of the construction,  it is enough to verify that 
$x_1^{+}$ and $x_1^{-}$ see every other vertex of $\overrightarrow{G}_6$ , where by ``a vertex $u$ \textit{sees} a vertex $v$" we mean that $u$ and $v$ are either adjacent or connected by a 2-dipath.

\medskip

Observe that $x_1^{-}$ sees each vertex of $X_1$ directly, each vertex $v$ of $X_2$ through some $u$ of $X_1$ due to property (iii) of the definition of extending partition, and each vertex of $X_3$ through $x_3^{+}$. 
Moreover, $x_1^{-}$ sees $x_1^{+}$ through vertices of $X_1$, $x_2^{-}$ through 
$x_3^{+}$, $x_2^{+}$ directly, $x_3^{-}$ through $x_2^{+}$, and 
$x_3^{+}$ directly. 

Similarly, observe that $x_1^{+}$ sees each vertex of $X_1$ directly, 
each vertex of $X_2$ through $x_2^{-}$, and
each vertex $u$ of $X_3$ through some $v$ of $X_1$ due to 
property (ii) of the definition of extending partition. 
Moreover, $x_1^{+}$ see $x_1^{-}$ through vertices of $X_1$, $x_2^{-}$ directly, $x_2^{+}$ through $x_3^{-}$,  $x_3^{-}$ directly, and 
$x_3^{+}$ through $x_2^{-}$.

The above arguments are valid in showing that the induced subgraphs $\overrightarrow{G}_2$ and $\overrightarrow{G}_4$ are also oriented cliques.

\medskip

Since, $\overrightarrow{G}$ is deeply critical, and the addition of the new vertices does not create any new adjacencies or $2$-dipaths between the vertices of $X_1$, $X_2$, and $X_3$, to verify whether $\overrightarrow{G}_6$ is deeply critical we need to only check whether removing the newly added arcs decreases the chromatic number of $\overrightarrow{G}_6$ by $2$. Furthermore, due to the symmetry of $\overrightarrow{G}_6$, it is enough to check 
for the arcs $x_1^{-}u$ and $ux_1^{+}$ for some $u \in X_1$, $x_1^{+}x_2^{-}$, and $x_1^{-}x_2^{+}$.

Thus, first let us consider the oriented graph
$\overrightarrow{G}_6 - x_1^{-}u$ for any $u \in X_1$.    Note that, due to property (ii), there exists a $v \in X_2$ which is neither adjacent nor connected by a $2$-dipath with $x_1^{-}$ any more. 
Hence, it is possible to assign the same color to $x_1^{-}$ and $v$, and on the other hand, it is possible to assign the same color to $x_3^{+}$ and $u$. If we assign all distinct colors to rest of the vertices of $\overrightarrow{G}_6 - x_1^{-}u$, then what we get is an oriented coloring. 
Thus, $\chi_o(\overrightarrow{G}_6 - x_1^{-}u) = |\overrightarrow{G}_6|-2$. Similarly, one can show that 
$\chi_o(\overrightarrow{G}_6 - u x_1^{+}) = |\overrightarrow{G}_6|-2$ for any $u \in X_1$.
Therefore, we are done.

Next let us consider the oriented graph 
$\overrightarrow{G}_6 - x_1^{-} x_2^{+}$. In this case, we can assign a particular color to the vertices $x_1^{-}, x_3^{-}$ and another particular color to the vertices $x_2^{+}, x_3^{+}$. The rest of the vertices can be assigned distinct colors. This gives us 
$\chi_o(\overrightarrow{G}_6 - x_1^{-} x_2^{+}) = |\overrightarrow{G}_6|-2$. 
Similarly, one can show that 
$\chi_o(\overrightarrow{G}_6 - x_1^{+} x_2^{-}) = |\overrightarrow{G}_6|-2$.
Hence, $\overrightarrow{G}_6$ is a deeply critical oriented clique.

The same arguments are valid in showing that $\overrightarrow{G}_2$ and $\overrightarrow{G}_4$ are also deeply critical oriented cliques. \qed
\end{proof}

Lemma~\ref{lem:ExtensionClique} implies that given an oriented deeply critical oriented clique on $n$ vertices, one may construct deeply critical oriented cliques on $n+2$, $n+4$ and $n+6$ vertices.
We note, however that computer search yields many examples of deeply critical oriented cliques that do not arise as an extension of a smaller deeply critical oriented clique.
Figure~\ref{fig:DC9_2} gives such an example.
Curiously, though generated by computer search, the oriented graph in Figure~\ref{fig:DC9_2} does arise as a $6$-extension of a directed three cycle.

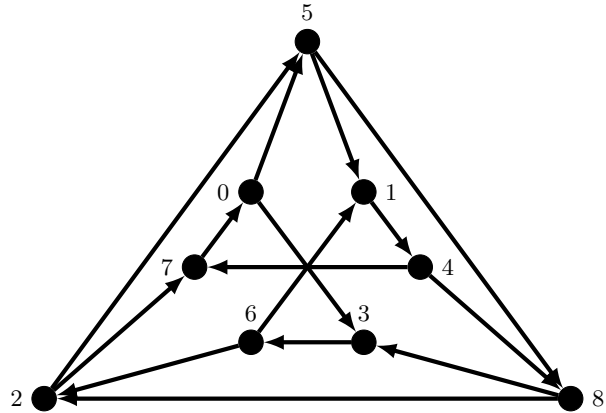
\begin{figure}[htbp]
	\centering
	\begin{tikzpicture}
		\tikzset{vertex/.style = {shape=circle,draw,minimum size=1em, fill=black}}
		\tikzset{edge/.style = {->,> = latex}}
		\node[vertex, label =$5$ ] (a) at  (3.75,5.25) {};
		\node[vertex, label =right:$8$ ] (b) at  (7.25,0.5) {};
		\node[vertex, label =left:$2$ ] (c) at  (0.25,0.5) {};
		\node[vertex, label =left:$7$ ] (d) at  (2.25,2.25) {};
		\node[vertex, label =left:$0$ ] (e) at  (3,3.25) {};
		\node[vertex, label =right:$1$ ] (f) at  (4.5,3.25) {};
		\node[vertex, label =right:$4$ ] (g) at  (5.25,2.25) {};
		\node[vertex, label =$3$ ] (h) at  (4.5,1.25) {};
		\node[vertex, label =$6$ ] (i) at  (3,1.25) {};
		\draw[edge, ultra thick] (a) to (b);
		\draw[edge, ultra thick] (b) to (c);
		\draw[edge, ultra thick] (c) to (a);
		\draw[edge, ultra thick] (c) to (d);
		\draw[edge, ultra thick] (i) to (c);
		\draw[edge, ultra thick] (b) to (h);
		\draw[edge, ultra thick] (g) to (b);
		\draw[edge, ultra thick] (e) to (a);
		\draw[edge, ultra thick] (a) to (f);
		\draw[edge, ultra thick] (d) to (e);
		\draw[edge, ultra thick] (f) to (g);
		\draw[edge, ultra thick] (h) to (i);
		\draw[edge, ultra thick] (e) to (h);
		\draw[edge, ultra thick] (g) to (d);
		\draw[edge, ultra thick] (i) to (f);
	\end{tikzpicture}
	\caption{A deeply critical oriented clique on 9 vertices.}
	\label{fig:DC9_2}
\end{figure}
Given $\overrightarrow{G}$, a deeply critical oriented clique with extending partition $V(\overrightarrow{G}) = X_1 \sqcup X_2 \sqcup X_3$, one may verify $V(\overrightarrow{G}_6) = X'_1 \sqcup X'_2 \sqcup X'_3$ where $X_i' = X_i \cup \{x_i^{-}, x_i^{+}\}$ for all $i \in \{1,2,3\}$ is an extending partition of $\overrightarrow{G}_6$.
\begin{lemma}\label{lem:6extendable}
	The $6$-extension of an extendable deeply critical oriented clique is extendable.
\end{lemma}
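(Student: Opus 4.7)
The plan is to verify directly that the partition suggested in the paragraph immediately preceding the lemma, namely $X_i' = X_i \cup \{x_i^{-}, x_i^{+}\}$ for $i \in \{1,2,3\}$, satisfies the three defining properties of an extending partition of $\overrightarrow{G}_6$. All three properties will follow from combining the hypothesis that $V(\overrightarrow{G}) = X_1 \sqcup X_2 \sqcup X_3$ is an extending partition of $\overrightarrow{G}$ with a short case analysis on the six new vertices.

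For property (i), I would note that the arcs of $\overrightarrow{G}_6$ fall into four classes: the arcs of $\overrightarrow{G}$; the arcs $x_i^{-}x$ for $x \in X_i$; the arcs $xx_i^{+}$ for $x \in X_i$; and the six arcs among the new vertices. The first class respects property (i) because the original partition does, and the second and third classes are internal to $X_i'$. For the six new arcs it suffices to observe that each arc $x_i^{-}x_{i+1}^{+}$ and $x_i^{+}x_{i+1}^{-}$ goes from $X_i'$ to $X_{i+1}'$, with indices taken modulo $3$.

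For property (ii), I would split the case on $u \in X_i'$. If $u \in X_i$, the witness $v \in X_{i+1}$ supplied by the extending partition of $\overrightarrow{G}$ still works, because no new vertex lying in $X_i'$ (that is, $x_i^{-}$ or $x_i^{+}$) has an arc into any vertex of $X_{i+1}$. If $u = x_i^{+}$, then $v = x_{i+1}^{-}$ works since its only in-neighbour in $\overrightarrow{G}_6$ is $x_i^{+}$. If $u = x_i^{-}$, then $v = x_{i+1}^{+}$ works since its in-neighbours are exactly $X_{i+1} \cup \{x_i^{-}\}$, and this set meets $X_i'$ only in $\{x_i^{-}\}$. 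Property (iii) is symmetric: for $v \in X_{i+1}$, the original witness $u \in X_i$ still works because its only new out-neighbour is $x_i^{+} \in X_i'$; for $v = x_{i+1}^{-}$, the vertex $u = x_i^{+}$ has sole out-neighbour $x_{i+1}^{-}$; for $v = x_{i+1}^{+}$, the vertex $u = x_i^{-}$ has out-neighbourhood $X_i \cup \{x_{i+1}^{+}\}$, meeting $X_{i+1}'$ in $\{x_{i+1}^{+}\}$ alone.

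I do not anticipate a real obstacle in this proof; the $6$-extension is engineered precisely so that $x_i^{-}$ and $x_i^{+}$ play, inside $X_i'$, the structural roles of a "universal source" into $X_i$ and a "universal sink" out of $X_i$, and the cyclic arcs among the new vertices parallel the original $X_1 \to X_2 \to X_3 \to X_1$ flow. The only point requiring care is confirming that the newly added vertices do not create arcs violating property (i) or spuriously enlarge the in-neighbourhoods and out-neighbourhoods used as witnesses in properties (ii) and (iii), which is a matter of careful bookkeeping against the explicit list of arcs added in the construction of $\overrightarrow{G}_6$.
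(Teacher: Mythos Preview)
Your proposal is correct and follows exactly the approach of the paper: both use the partition $X_i' = X_i \cup \{x_i^{-}, x_i^{+}\}$, and whereas the paper simply asserts that the verification of the three defining properties is straightforward, you carry out that verification explicitly and accurately.
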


\begin{proof}
Let $\overrightarrow{G}$ be a deeply critical oriented clique having a extending partition $V(\overrightarrow{G}) = X_1 \sqcup X_2 \sqcup X_3$. 
We claim that $\overrightarrow{G}_6$, which is also a deeply critical oriented clique due to Lemma~\ref{lem:ExtensionClique}, admits a extending partition 
$V(\overrightarrow{G}) = X'_1 \sqcup X'_2 \sqcup X'_3$, where we have 
 $X_i' = X_i \cup \{x_i^{-}, x_i^{+}\}$ for all $i \in \{1,2,3\}$. 
 
 As the complete description of $\overrightarrow{G}_6$ is available, it is straight forward to verify the claim. \qed
\end{proof}

With these two lemmas in place, we provide a  proof of Theorem~\ref{thm:oddDCOC}. 

\begin{proof}[Theorem~\ref{thm:oddDCOC}]
	The directed cycle on $5$ vertices is a deeply critical oriented clique.
	By computer search there is no deeply critical oriented clique on $7$ vertices.
	
	The oriented graph given in Figure~\ref{fig:DC9_2} is a  deeply critical oriented clique with $9$ vertices.
	This oriented graph admits the following extending partition:
	 \(X_1 = \{6, 2, 7\}, X_2 = \{1, 5, 0\}, X_3 = \{4, 8, 3\}\).
	The result now follows inductively from the following observation:
	By Lemmas~\ref{lem:ExtensionClique} and~\ref{lem:6extendable}, if $\overrightarrow{H}$ is an extendable deeply critical oriented clique with $n$ vertices, then there exists deeply critical oriented clique on $n+2$ and $n+4$ vertices, and an extendable deeply critical oriented clique on $n+6$ vertices. \qed
\end{proof}

\section{Proof of Theorem~\ref{thm:noEvenCirc}}\label{sec:noEvenCirc}
We provide a proof of Theorem~\ref{thm:noEvenCirc} by first giving a full classification of deeply critical oriented circulant cliques.
\begin{lemma}
	\label{lem:char}
	The circulant graph $\overrightarrow{C}(n,S)$ is a deeply critical oriented clique if and only if for every $k \in \mathbb{Z}_n$
	\begin{enumerate}[(a)]
		\item there exists $x,y \in S \cup \{0\}$ so that \(k \equiv x +y \pmod n\) or \(k \equiv -(x+y)\pmod n\); and
		\item if $k$ is even and $\frac{k}{2} \in S$ then the only way to express $k$ as in (a) is by taking $x=y=\frac{k}{2}$, and writing $k \equiv (x + y) \pmod n$.
	\end{enumerate}
\end{lemma}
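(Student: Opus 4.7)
The plan is to translate both the oriented-clique and deeply-critical conditions into statements about sums of elements of $T = S \cup \{0\}$, exploiting the abelian group structure of $\mathbb{Z}_n$. Condition (a) is precisely equivalent to $\overrightarrow{C}(n,S)$ being an oriented clique: by vertex-transitivity, two distinct vertices $0$ and $k$ see each other (are adjacent or joined by a $2$-dipath) exactly when $k \in S \cup -S \cup (S+S) \cup -(S+S)$, which collapses to $k \in (T+T) \cup -(T+T)$. Applying Theorem~\ref{thm:2dipathClassify} yields this equivalence directly.

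For the deeply-critical part, vertex-transitivity lets me restrict attention to the removal of a single arc $0 \to s$ for each $s \in S$. Since the original graph is a clique, the only vertex pairs whose connectivity can change upon deleting $0 \to s$ are those whose sole adjacency or $2$-dipath used that arc. A case analysis produces three types of potentially \emph{separable} pairs (pairs that may safely receive the same color) in the modified graph: Type~1, the pair $\{0,s\}$ itself; Type~2, pairs $\{0,y\}$ whose only $2$-dipath was $0 \to s \to y$; and Type~3, pairs $\{x,s\}$ whose only $2$-dipath was $x \to 0 \to s$. A short uniqueness argument forces $y = 2s$ in Type~2 and $x = -s$ in Type~3. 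Since every separable pair contains $0$ or $s$, the only way to obtain two \emph{disjoint} separable pairs is to combine a Type~2 and a Type~3 pair, namely $\{0, 2s\}$ and $\{-s, s\}$. These are disjoint exactly when $3s \neq 0$, and the case $3s = 0$ is itself incompatible with (b): the expression $2s \equiv -(s + 0) \pmod n$ would then be a second valid representation of $k = 2s$ in the sense of (a).

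With this picture in place, the forward direction follows by contraposition: if (b) fails for some $k = 2s$, the extra representation produces either an unwanted adjacency or an unwanted $2$-dipath between the vertices of $\{0,2s\}$ or $\{-s,s\}$, so neither Type~2 nor Type~3 yields a separable pair, and the only remaining option is the Type~1 merge $\{0,s\}$, giving $\chi_o \geq n-1$ rather than $n-2$. For the converse, I would construct the explicit $(n-2)$-coloring of $\overrightarrow{C}(n,S) - (0 \to s)$ that assigns one color to $\{0, 2s\}$, a second to $\{-s, s\}$, and distinct colors to every other vertex, then verify the two oriented coloring conditions. Condition (i) follows immediately from (b). The main technical obstacle is verifying condition (ii): one must show that no pair of arcs in the modified graph combines to an antisymmetric conflict in the quotient. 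Each potential conflict reduces to a forbidden representation of $\pm 2s$ in $T \times T$ or to the use of the removed arc $0 \to s$ in a $2$-dipath, and in each case (b) rules it out. A secondary subtlety is excluding the remaining route to $\chi_o \leq n-2$, namely merging a single triple of vertices, which one checks requires exactly the configurations that (b) already forbids (together with $3s = 0$, which (b) also forbids).
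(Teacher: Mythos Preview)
Your approach is correct and essentially identical to the paper's: both directions rest on the same explicit $(n-2)$-colouring of $\overrightarrow{C}(n,S)-(0\to s)$ that merges $\{0,2s\}$ and $\{-s,s\}$, and the ``only if'' part of (b) is obtained by showing that a second representation of $2s$ leaves $\{0,s\}$ as the sole separable pair after the deletion, forcing $\chi_o\ge n-1$. You are in fact more careful than the paper on two points it glosses over --- the verification of oriented-colouring axiom~(ii) for the merged colour classes (the paper simply asserts the colouring is valid), and the $3s=0$ degeneracy needed to guarantee the two merged pairs are genuinely disjoint.

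One small muddle: your final sentence about ``excluding the remaining route to $\chi_o\le n-2$ via a triple merge'' belongs to the forward (contrapositive) direction, not the converse, and there you are working under $\lnot$(b), so the phrase ``configurations that (b) already forbids'' is inverted. The underlying point is still right --- under $\lnot$(b) the Type~2 and Type~3 pairs fail to be separable, hence no triple $\{0,s,z\}$ can be merged either --- but the sentence should be relocated and rephrased accordingly.
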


\begin{proof}
First let us prove the ``if'' part. Therefore, assume that $\overrightarrow{C}(n,S)$ satisfies the given conditions. The indices mentioned hereafter are assumed to be taken $\pmod n$.

Note that, for any $x \in S$, there is an arc from $v_{i-x}$ to $v_{i}$ and  from $v_{i}$ to $v_{i+x}$. Thus, for any non-zero values of $x, y \in S$, 
There exists a $2$-dipath from $v_{i-(x+y)}$ to $v_{i}$ and from $v_{i}$ to $v_{i+(x+y)}$.
Thus condition (a) of the statement implies that, for any two vertices 
of $\overrightarrow{C}(n,S)$ there is either an arc or a $2$-dipath. 
Hence, $\overrightarrow{C}(n,S)$ is an oriented absolute clique due to 
Theorem~\ref{thm:2dipathClassify}.

Next we will show that $\overrightarrow{C}(n,S)$ is deeply critical. 
To do so, we need to show that removal of any arc $v_iv_{i+x}$ will 
decrease the oriented chromatic number by exactly $2$. As we know~\cite{borodin2001deeply} that removal of an arc can decrease the oriented chromatic number by at most $2$, it will be enough to show that there exists an oriented $(n-2)$-coloring of 
$[\overrightarrow{C}(n,S) - v_iv_{i+x}]$ for each $x \in S$. 
Moreover, as $\overrightarrow{C}(n,S)$ is vertex transitive, it is enough to consider $i = 0$, in particular for our proof. 

Thus, let $\overrightarrow{C}'$ be the oriented graph obtained by
deleting the arc $v_0v_x$ from $\overrightarrow{C}(n,S)$, for some $x \in S$. 
Therefore, due to condition (b) of the statement, 
we know that there is no arc or $2$-dipath connecting the pairs of vertices 
$(v_{-x}, v_{x})$ and $(v_{0}, v_{2x})$.  Note that, the oriented coloring of 
$\overrightarrow{C}'$ given by assigning color $1$ to $v_{-x}, v_x$, 
color $2$ to $v_0, v_{2x}$, and $(n-4)$ distinct colors to the other vertices is 
an oriented $(n-2)$-coloring of $\overrightarrow{C}'$. Thus we are done with the ``if'' part of the proof.

\medskip

Now we will prove the ``only if'' part of the proof. Assume that $\overrightarrow{C}(n,S)$ is a deeply critical oriented clique. 

First note that condition (a) of the statement is trivial for $k = 0$. 
Suppose, if condition (a) is not satisfied for some integer $k \neq 0$, 
then we can say that $v_0$ and $v_{k}$ are neither adjacent nor connected by a $2$-dipath. This contradicts the fact that $\overrightarrow{C}(n,S)$ is an oriented absolute clique. 

Next suppose that condition (b) of the statement is false for some even $k \in S$. Observe that, surely $k \equiv (\frac{k}{2}+\frac{k}{2}) \pmod n$ is one way to express $k$. 
However, as condition (b) is false, there is another way of expressing 
$k$ according to condition (a), that is $k = x + y \pmod n$ or $k = -(x + y) \pmod n$.

Let $\overrightarrow{C}'$ be the oriented graph obtained by
deleting the arc $v_0v_{\frac{k}{2}}$ from $\overrightarrow{C}(n,S)$. 
Note that, all the adjacent vertices of $v_0$ in $\overrightarrow{C}(n,S)$, 
except  $v_{\frac{k}{2}}$, remains adjacent in $\overrightarrow{C}'$. 
Suppose, $v_{l}$ is a vertex which was not adjacent to $v_0$ in $\overrightarrow{C}(n,S)$. As we know that $\overrightarrow{C}(n,S)$ is an oriented absolute clique, due to Theorem~\ref{thm:2dipathClassify}, there must be a $2$-dipath connecting $v_0$ to $v_l$. 
Observe that, there is yet another $2$-dipath $v_0v_{y}v_{x+y}$ connecting $v_0$ 
and $v_l$ unless $x=y=\frac{l}{2}$. 
If the $2$-dipath was from 
$v_0$, then it must have been of the form $v_0v_{x}v_{x+y}$, where $(x+y) = l$. The case if the $2$-dipath is from $v_l$ to $v_0$ is similar. 

Therefore, even after removing the arc 
$v_0v_{\frac{k}{2}}$, all vertices except maybe $v_{\frac{k}{2}}$, will 
remain adjacent or connected by a $2$-dipath with $v_0$.

Similarly, we can show that even after removing the arc 
$v_0v_{\frac{k}{2}}$, all vertices except maybe $v_0$, will 
remain adjacent or connected by a $2$-dipath with $v_{\frac{k}{2}}$. 

As any pair of vertices not including $v_0$ or $v_{\frac{k}{2}}$ have not used the arc $v_0v_{\frac{k}{2}}$ for being adjacent or connected by a $2$-dipath 
in $\overrightarrow{C}(n,S)$, notice that those pairs remain adjacent or connected by a $2$-dipath in $\overrightarrow{C}'$. 

Thus, as every pairs of vertices, except maybe $(v_0, v_{\frac{k}{2}})$ are either adjacent or connected by a $2$-dipath in $\overrightarrow{C}'$, 
the oriented chromatic number of $\overrightarrow{C}'$ must be at least $(n-1)$. This contradicts that fact that $\overrightarrow{C}(n,S)$ is a deeply critical oriented graph. Thus, the converse is also proved. \qed
\end{proof}

Part (a) ensures that $\overrightarrow{C}(n,S)$ is an absolute oriented clique.
As circulant oriented graphs are vertex transitive we need only verify that (a) is equivalent to vertex $0$ being either adjacent or connected by a $2$-dipath to every vertex in $\overrightarrow{C}(n,S)$.
When $k \equiv x +y \pmod n$ or $k \equiv -(x+y) \pmod n$ and $x,y \neq 0$, there is a $2$-dipath, in some direction, between $0$ and $k$.
Otherwise if $x=0$ or $y=0$, then $0$ and $k$ are adjacent.

Part (b) ensures $\overrightarrow{C}(n,S)$ is deeply critical.
We appeal to vertex transitivity and use the stated condition to verify that for all vertices $k$ not adjacent to $0$ there is exactly one $2$-dipath between $0$ and $k$. 

\begin{proof}[Proof of Theorem~\ref{thm:noEvenCirc}]
Let $\overrightarrow{C}(n,S)$ be an oriented circulant graph so that $\overrightarrow{C}(n,S)$ is an absolute clique and $n$ is even.
As $n$ is even we have $\frac{n}{2} \equiv - \frac{n}{2} \pmod n$.
Therefore $\frac{n}{2} \not\in S$.
Subsequently vertices $0$ and $\frac{n}{2}$ are not adjacent in $\overrightarrow{C}(n,S)$.

Since $\overrightarrow{C}(n,S)$ is an absolute clique, by Theorem~\ref{thm:2dipathClassify}
there is a $2$-dipath connecting vertices $0$ and \(\frac{n}{2}\).
Thus, there exists $x,y \in S$ satisfying $(x+y) \equiv \frac{n}{2} \pmod n$ or $-(x+y) \equiv \frac{n}{2} \pmod n$.
Therefore, $2x \equiv x+x \equiv -(y+y) \pmod n$ or $-2x \equiv -x-x \equiv y+y \pmod n$, violating part (b) of 
Lemma~\ref{lem:char}. 
Therefore, $\overrightarrow{C}(n,S)$ is not a deeply critical oriented clique.  \qed
\end{proof}

\section{Conclusions and Outlook}\label{sec:Conclusions}
Work in~\cite{borodin2001deeply}  and herein provide examples of infinite families of deeply critical oriented cliques.
These constructions and extensive computer search have yielded no examples of deeply critical oriented cliques of even order.
These observations together with the result of Theorem~\ref{thm:noEvenCirc} lend support the statement of Conjecture~\ref{conj:onlyOdd}.

Our computer search has yielded surprising insight into the density of deeply critical oriented cliques among the family of absolute oriented cliques.
We identified 9917 examples of previously unknown sporadic deeply critical oriented cliques on up to 17 vertices.
In addition to these examples, our search of oriented circulants found 28 examples of previously unknown deeply critical circulant oriented cliques on up to 49 vertices.
A classification of odd orders for which there exists a deeply critical circulant oriented clique remains open.

A result of Erd\"os~\cite{E63} implies that asymptotically almost surely, every oriented graph is an absolute oriented clique.
Though attempts to extend this result to deeply critical oriented cliques of odd order have so far been unsuccessful, it is possible that an analogous statement is true for deeply critical oriented cliques.

\bibliographystyle{abbrv}
\bibliography{References}

\begin{thebibliography}{1}

\bibitem{bondy}
J.~Bondy and U.~Murty.
\newblock {\em Graph Theory}.
\newblock Number 244 in Graduate {Texts} in {Mathematics}. Springer, 2008.

\bibitem{borodin2001deeply}
O.~V. Borodin, D.~Fon-Der-Flaass, A.~V. Kostochka, A.~Raspaud, and
  {\'E}.~Sopena.
\newblock On deeply critical oriented graphs.
\newblock {\em Journal of Combinatorial Theory, Series B}, 81(1):150--155,
  2001.

\bibitem{Co94}
B.~Courcelle.
\newblock The monadic second order logic of graphs {V}{I}: On several
  representations of graphs by relational structures.
\newblock {\em Discrete Applied Mathematics}, 54(2-3):117--149, 1994.

\bibitem{courcell1995monadic}
B.~Courcelle.
\newblock The monadic second-order logic of graphs {I}{X}: machines and their
  behaviours.
\newblock {\em Theoretical Computer Science}, 151(1):125--162, 1995.

\bibitem{E63}
P.~Erd{\"o}s.
\newblock On a problem in graph theory.
\newblock {\em The Mathematical Gazette}, 47(361):220--223, 1963.

\bibitem{KMG04a}
W.~Klostermeyer and G.~MacGillivray.
\newblock Analogues of cliques for oriented coloring.
\newblock {\em Discussiones Mathematicae Graph Theory}, 24(3):373--387, 2004.

\bibitem{nss16}
A.~Nandy, S.~Sen, and {\'E}.~Sopena.
\newblock Outerplanar and planar oriented cliques.
\newblock {\em Journal of Graph Theory}, 82(2):165--193, 2016.

\bibitem{S16}
{\'E}.~Sopena.
\newblock Homomorphisms and colourings of oriented graphs: An updated survey.
\newblock {\em Discrete Mathematics}, 339(7):1993--2005, 2016.

\end{thebibliography}

\end{document}